\theoremstyle{definition}
\newtheorem{definition}{Definition}[section]
\newtheorem{remark}{Remark}[section]
\newtheorem{lemma}{Lemma}[section]
\newtheorem{theorem}{Theorem}[section]
\newcommand{\One}{\ensuremath{\bar{1}}}
\newcommand{\Zero}{\ensuremath{\bar{0}}}
\newcommand{\State}{\ensuremath{Q}}
\newcommand{\Alphabet}{\ensuremath{\Sigma}}
\newcommand{\Weight}{\ensuremath{\mathbb{K}}}
\newcommand{\astar}{A*}
\newcommand{\real}{\ensuremath{\mathbb{R}}}
\newcommand{\preal}{\ensuremath{\mathbb{R}_+}}
\renewcommand{\max}{\ensuremath{\textrm{max}}}
\renewcommand{\min}{\ensuremath{\textrm{min}}}
\renewcommand{\ln}{\ensuremath{\textrm{ln}}}
\renewcommand{\log}{\ensuremath{\textrm{log}}}
\newcommand{\set}[1]{\left\{#1\right\}}
\title{\astar~shortest string decoding for non-idempotent semirings}
\author{Kyle Gorman \& Cyril Allauzen \\ Google LLC}
\date{}
\begin{document}

\maketitle

\begin{abstract}%
The single shortest path algorithm is undefined for weighted finite-state automata over non-idempotent semirings because such semirings do not guarantee the existence of a shortest path.
However, in non-idempotent semirings admitting an order satisfying a monotonicity condition (such as the plus-times or log semirings),
the shortest string is well-defined.
We describe an algorithm which finds the shortest string for a weighted non-deterministic automaton over such semirings using the backwards shortest distance of an
equivalent deterministic automaton (DFA) as a heuristic for \astar~search performed over a companion idempotent semiring,
This algorithm is proven to return the shortest string.
There may be exponentially more states in the equivalent DFA, but the proposed algorithm needs to visit only a small fraction of them if determinization is performed \say{on the fly}.
\end{abstract}

\section{Introduction}

Weighted finite-state automata provide a compact representation of hypotheses in various speech recognition and text processing applications \citep[e.g.,][]{Mohri1997b,MohriEtAl2002,RoarkSproat2007,FSTP}.
Under a wide range of assumptions,
weighted finite-state lattices allow for efficient polynomial-time decoding via shortest-path algorithms \citep{Mohri2002}.

The shortest path%
---and the algorithms that compute it---%
are well-defined when the weights of a lattice are \emph{idempotent} and exhibit the \emph{path property}.
These properties are formalized below, but
informally they hold that the distance between any two states corresponds to a single path between those states,
so that the shortest-path algorithm%
---having identified this path---%
does not need to consider the weights of competing paths between those states.
However, when the weights of a lattice lack these two properties,
there is no guarantee that a shortest path between any two states exists.
This situation arises in many speech and language technologies.
For instance, generative models for speech recognition and machine translation%
---and in many unsupervised settings---%
often use \emph{expectation maximization} \citep[EM;][]{DempsterEtAl1977} or related algorithms for learning; such models generally lack these two key properties.
Under many conditions, efficient decoding of a lattice constructed using EM is required;
in this case,
one can decode approximately by interpreting the lattice as if it were idempotent and had the path property,
or one can train the model using the Viterbi approximation to EM,
and then decode using an ordinary shortest-path algorithm.%
\footnote{
    Both of these strategies are discussed by \citeauthor{BrownEtAl1993} (\citeyear{BrownEtAl1993}; see \S4.3 and \S6.2 respectively), though they do not refer to these semiring properties by name.
}

In non-idempotent semirings admitting an order satisfying a monotonicity condition, the shortest path is undefined but the closely related notion of \emph{shortest string} is well-defined.
We show below that it is still possible to efficiently determine the shortest string for lattices defined over non-idempotent monotonic negative semirings such as the plus-times
and log semirings, both used for expectation maximization.
We propose a simple algorithm for decoding the shortest string over such semirings which combines shortest-path search with the \astar~queue discipline \citep{HartEtAl1968} and \say{on the fly} determinization \citep{Mohri1997b}.
After providing definitions and the algorithm,
we describe an implementation and evaluate it using word lattices produced by a speech recognizer.
The algorithm%
---in contrast to a naïve algorithm---%
is shown to scale well as a function of lattice size.

\section{Definitions}

Before we introduce the proposed decoding algorithm we provide definitions of key notions.

\subsection{Semirings}

Weighted automata algorithms operate with respect to an algebraic system known as a \emph{semiring}, characterized by the combination of two \emph{monoids}.

\begin{definition}
A \emph{monoid} is a pair $(\Weight, \bullet)$
where $\Weight$ is a set
and $\bullet$ is a binary operator over $\Weight$
with the following properties:
\begin{enumerate}
\item \emph{closure}: $\forall a, b \in \Weight : a \bullet b \in \Weight$.
\item \emph{associativity}: $\forall a, b, c \in \Weight : (a \bullet b) \bullet c = a \bullet (b \bullet c)$.
\item \emph{identity}: there exists an identity element $e \in \Weight$ such that $\forall a \in \Weight : e \bullet a = a \bullet e = a$.
\end{enumerate}
\end{definition}

\begin{definition}
A monoid is \emph{commutative} in the case that $\forall a, b \in \Weight : a \bullet b = b \bullet a$.
\end{definition}

\begin{definition}
\label{d:semiring}
A semiring is a five-tuple $(\Weight, \oplus, \otimes, \Zero, \One)$ where:
\begin{enumerate}
\item $(\Weight, \oplus)$ is a commutative monoid with identity element $\Zero$.
\item $(\Weight, \otimes)$ is a monoid with identity element $\One$.
\item $\forall a \in \Weight : a \otimes \Zero = \Zero \otimes a = \Zero$.
\item $\forall a, b, c \in \Weight : a \otimes (b \oplus c) = (a \otimes b) \oplus (a \otimes c)$.
\end{enumerate}
\end{definition}

\begin{definition}
A semiring is \emph{zero-sum-free} if non-\Zero~elements cannot sum to \Zero;
that is, $\forall a, b \in \Weight : a \oplus b = \Zero \implies a = b = \Zero$.
\end{definition}

\begin{definition}
A semiring is \emph{idempotent} if $\oplus$ is idempotent;
that is, $\forall a \in \Weight : a \oplus a = a$.
\label{d:idempotent}
\end{definition}

\begin{definition}
A semiring has the \emph{path property} if  $\forall a, b \in \Weight : a \oplus b \in \{a, b\}$.
\label{d:path}
\end{definition}

\begin{remark}
If a semiring has the path property it is also idempotent.
\end{remark}

\begin{definition}
The \emph{natural order} of an idempotent semiring is a boolean operator $\preceq$
such that $\forall a, b \in \Weight : a \preceq b$ if $a \oplus b = a$.
\end{definition}

\begin{remark}
In a semiring with the path property, the natural order is a \emph{total} order.
That is, $\forall a, b \in \Weight$, either $a \preceq b$ or $b \preceq a$.
\end{remark}

\begin{definition}
A semiring is \emph{monotonic} if $\forall a, b, c \in \Weight$, $ a \preceq b$ implies:
\begin{enumerate}
\item $a \oplus c \preceq b \oplus c$.
\item $a \otimes c \preceq b \otimes c$.
\item $c \otimes a \preceq c \otimes b$.
\end{enumerate}
\end{definition}

\begin{definition}
A semiring is \emph{negative} if $\One \preceq \Zero$.
\end{definition}

\begin{remark}
In a monotonic negative semiring, $\forall a, b \in \Weight : a \preceq \Zero$ 
and $a \oplus b \preceq b$.
\end{remark}

Some examples of monotonic negative semirings are given in \autoref{t:semirings}.

\begin{table*}[ht]
\centering
\begin{tabular}{l r cccccc}
\toprule
            && $\Weight$                         & $\oplus$        & $\otimes$ & $\Zero$   & $\One$ & $\preceq$ \\
\midrule
Plus-times  && $\preal$                          & $+$             & $\times$  & $0$       & $1$    & $\geq$ \\
Max-times   && $\preal$                          & $\max$          & $\times$  & $0$       & $1$    & $\geq$ \\ 
Log         && $\real \cup \{-\infty, +\infty\}$ & $\oplus_{\log}$ & $+$       & $+\infty$ & $0$    & $\leq$ \\
Tropical    && $\real \cup \{-\infty, +\infty\}$ & $\min$          & $+$       & $+\infty$ & $0$    & $\leq$ \\
\bottomrule
\end{tabular}
\caption{Common monotonic negative semirings and the associated natural orders; $a \oplus_{\log} b = -\ln(e^{-a}+ e^{-b})$.}
\label{t:semirings}
\end{table*}

\begin{definition}
The \emph{companion semiring} of a monotonic negative semiring $(\Weight, \oplus, \otimes, \Zero, \One)$ with total order $\preceq$ is the semiring $(\Weight, \widehat{\oplus}, \otimes, \Zero, \One)$ where $\widehat{\oplus}$ is the minimum binary operator for $\preceq$:
\begin{equation*}
    a~\widehat{\oplus}~b = \begin{cases}
        a & \textrm{if } a \preceq b \\
        b & \textrm{otherwise} .\\
    \end{cases}
\end{equation*}
\end{definition}

\begin{remark}
The max-times and tropical semirings are companion semirings to the plus-times
and log semirings, respectively.
\end{remark}

\begin{remark}
By construction a companion semiring has the path property and natural order $\preceq$.
\end{remark}

\subsection{Weighted finite-state acceptors}

Without loss of generality,
we consider single-source $\epsilon$-free weighted finite-state acceptors.%
\footnote{
    The definition provided here can easily be generalized to automata with multiple initial states,
    a single final state,
    initial or final weights,
    or $\epsilon$-transitions
    (e.g., \citealp[~ch.~1]{RoarkSproat2007};
    \citealp{Mohri2009};
    \citealp[~ch.~1]{FSTP}).
}

\begin{definition}
A \emph{weighted finite-state acceptor} (WFSA) is defined by a five-tuple
$(Q, s, \Alphabet, \omega, \delta)$
and a semiring $(\Weight, \oplus, \otimes, \Zero, \One)$
where:

\begin{enumerate}
\item $\State$ is a finite set of states.
\item $s \in \State$ is the \emph{initial state}.
\item $\Alphabet$ is the \emph{alphabet}.
\item $\omega \subseteq \State \times \Weight$ is the \emph{final weight function}.
\item $\delta \subseteq \State \times \Alphabet \times \Weight \times \State$ is the \emph{transition relation}.
\end{enumerate}
\end{definition}

\begin{definition}
An WFSA is \emph{acyclic} if there exists a \emph{topological ordering},
an ordering of the states such that if there is a transition from state $q$ to $r$
where $q, r \in \State$, then $q$ is ordered before $r$.
Otherwise, the WFSA is \emph{cyclic}.
\end{definition}

\subsection{Shortest distance}

\begin{definition}
A state $q \in \State$ is \emph{final} if $\omega(q) \ne \Zero$.
\end{definition}

\begin{definition}
Let $F = \{q \mid \omega(q) \ne \Zero\}$ denote the set of final states.
\end{definition}

\begin{definition}
A \emph{path} through an acceptor $p$ is a triple consisting of:

\begin{enumerate}
\item a {state sequence} $q[p] = q_1, q_2, \ldots, q_n \in \State^n$,
\item a {weight sequence} $k[p] = k_1, k_2, \ldots, k_n \in \Weight^n$, and
\item a {string} $z[p] = z_1, z_2 \ldots, z_n \in \Alphabet^{n}$
\end{enumerate}
such that $\forall i \in [1, n] : (q_i, z_i, k_i, q_{i + 1}) \in \delta$;
that is, each transition from $q_i$ to $q_{i + 1}$ must have label $z_i$ and weight $k_i$.
\end{definition}

\begin{definition}
Let $P_{q \rightarrow r}$ be the set of all paths from $q$ to $r$ where $q, r \in \State$.
\end{definition}

\begin{definition}
The \emph{forward shortest distance} $\alpha \subseteq \State \times \Weight$ is a partial function from a state $q \in \State$ that gives the $\oplus$-sum of the $\otimes$-product of the weights of all paths from the initial state $s$ to $q$:
\begin{equation*}
\alpha(q) = \bigoplus_{
p \in P_{s \rightarrow q}} \bigotimes_{k_i \in k[p]} k_i  .
\end{equation*}
\end{definition}

\begin{definition}
The \emph{backwards shortest distance} $\beta \subseteq \State \times \Weight$ is a partial function from a state $q \in \State$  that gives the $\oplus$-sum of the $\otimes$-product of the weights of all paths from
$q$ to a final state, including the final weight of that final state:

\begin{equation*}
\beta(q) = \bigoplus_{f \in F} \left( \bigoplus_{p \in P_{q \rightarrow f}} \bigotimes_{k_i \in k[p]} k_i \otimes \omega(f) \right).
\end{equation*}
\end{definition}

\begin{definition}
A state is \emph{accessible} if there exists a path to it from the initial state $s$.
\end{definition}

\begin{definition}
A state is \emph{coaccessible} if there exists a path from it to a final state $f \in F$.
\end{definition}

\begin{remark}
For a state $q$, $\alpha(q)$ and $\beta(q)$ are defined if and only if $q$ is accessible and coaccessible, respectively.
\end{remark}

\begin{definition}
The \emph{total shortest distance} of an automaton is $\beta(s)$.
\end{definition}

\subsection{Shortest path}

\begin{definition}
A path is \emph{complete} if
\begin{enumerate}
\item $(s, z_1, k_1, q_1) \in \delta$.
\item $q_n \in F$.
\end{enumerate}
\noindent
That is, a complete path must also begin with an arc from the initial state $s$ to $q_1$ with label $z_1$ and weight $k_1$, and halt in a final state.
\end{definition}

\begin{definition}
The weight of a complete path is given by the $\otimes$-product of its weight sequence and its final weight:
\begin{equation*}
\bar{k} = \left(\bigotimes_{k_i \in k[p]} k_i \right) \otimes \omega(q_n).
\end{equation*}
\end{definition}

\begin{definition}
A \emph{shortest path} through an automaton is
a complete path whose weight is equal to the total shortest distance $\beta(s)$.
\end{definition}

\begin{remark}
Automata over non-idempotent semirings may lack a shortest path \citep[322]{Mohri2002}.
Consider for example the NFA shown in the left side of \autoref{f:motivating}.
Let us assume that $k \oplus k \preceq k < k'$.
Then, the total shortest distance is $k \oplus k$ but the shortest path is $k$.
By definition, a non-idempotent semiring does not guarantee that these two weights will be equal.
Then there is no complete path whose weight is that of the total shortest distance, and thus no shortest path exists.
\end{remark}

\begin{remark}
It is not possible in general to efficiently find the shortest path over non-monotonic semirings.
See \citet{Mohri2002} for general conditions under which the shortest path can be found in polynomial time.

\end{remark}

\subsection{Determinization}

\begin{definition}
A WFSA is \emph{deterministic} if,
for each state $q \in \State$,
there is at most one transition with a given label $z \in \Sigma$ from that state,
and \emph{non-deterministic} otherwise.
\end{definition}

\begin{definition}
A zero-sum-free semiring is \emph{weakly divisible} if
\begin{equation*}
\forall a,b \in \Weight \; \exists c \in \Weight : a = (a \oplus b) \otimes c.
\end{equation*}
\end{definition}

\begin{definition}
A weakly divisible semiring is \emph{cancellative} if $c$ is unique and can thus be denoted by $c = (a \oplus b)^{-1} a$ \citep[238]{Mohri2009}.
\end{definition}

\begin{remark}
All semirings in \autoref{t:semirings} are zero-sum-free, weakly divisible, and cancellative.
\end{remark}

\begin{remark}
\label{r:nfa}
For every non-deterministic, acyclic WFSA (or NFA) over a zero-sum-free, weakly divisible and cancellative semiring,
there exists an equivalent deterministic WFSA (or DFA).
However, a DFA may be exponentially larger than an equivalent NFA \citep[\S2.3.6]{HopcroftEtAl2008}.
\end{remark}

We now provide a brief presentation of the determinization algorithm for WFSAs.
Proofs can be found in \citealt{Mohri1997b}.
Given an WFSA $A = (Q, s, \Alphabet, \omega, \delta)$ over a zero-sum-free, weakly divisible and cancellative semiring $(\Weight, \oplus, \otimes, \Zero, \One)$,
its equivalent DFA can be defined and constructed as the DFA  $A_d = (Q_d, s_d, \Alphabet, \omega_d, \delta_d)$ where
${\State}_d$ is a finite set whose elements are subsets of $\State \times \Weight$, recursively defined as follows:
\begin{enumerate}
\item $s_d = \set{(s, \One)} \in {\State}_d$.
\item $\kappa_d \subseteq \State_d \times \Alphabet \times \Weight$ is the \emph{weight transition function}, defined as
\begin{equation*}
\kappa_d(q, z) = \bigoplus_{(q_i, k_i) \in q} k_i \otimes \left( \bigoplus_{(q_i, z, k_j, r_j) \in \delta} k_j \right).
\end{equation*}
\item $\nu_d \subseteq \State_d \times \Alphabet \times \State_d$ is the \emph{next-state transition function}, defined as $\nu_d(q, z) =$
\begin{equation*}
\bigcup_{\footnotesize\begin{array}{r}{} (q_i, k_i) \in q \\ (q_i, z, k_j, r_j) \in \delta \end{array}} \set{(r_j, \kappa_d(q, z)^{-1} l_j)}
\end{equation*}
where $l_j = \bigoplus_{(q_i, z, k_j, r_j) \in \delta} k_i \otimes k_j$.
\item $Q_d = \nu_d^*(s_d, \Alphabet)$ defines the set of states as the closure of the next-state transition function.
\end{enumerate}
The transition relation is then defined as 
\begin{equation*}
\delta_d = \set{(q, z , \kappa_d(q, z), \nu_q(q,z)) | (q, z) \in \State_d \times \Alphabet}
\end{equation*}
and the final weight function $\omega_d \subseteq Q_d \times \Weight$ as
\begin{equation*}
\omega_d(q) =  \bigoplus_{(q_i, k_i) \in q} k_i \otimes \omega(q_i).
\end{equation*}
The intuition underlying this construction is that a state $q\in Q_d$ encodes a set of states in $Q$ that can be reached from $s$ by some common strings.
More precisely, let $p'$ be the unique path in $P_{s_d \rightarrow q}$ labeled by some $z' \in \Alphabet^*$, then for any $(q_i, k_i) \in q$:
\begin{equation*}
k[p'] \otimes k_i = \bigoplus_{p \in P_{s \rightarrow q_i} : z[p] = z'} k[p].
\end{equation*}
Termination is guaranteed for acyclic WFSAs \citep{Mohri1997b}.

\autoref{f:motivating} gives an example of an NFA and an equivalent DFA. States 0 and 1 in the DFA correspond respectively to the subsets ${(0, \One)}$ and ${(1, \One)}$ and $\kappa_d(0, a) = k \otimes k$.

\begin{figure*}[ht]
\centering
\begin{tabular}{l c l}
\includegraphics[width=.3\linewidth]{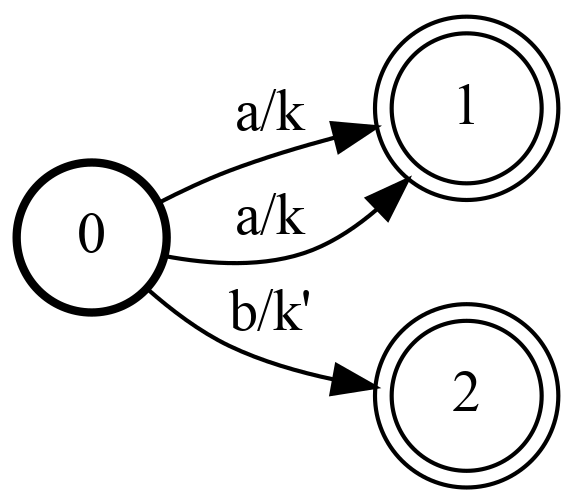} && \includegraphics[width=.31\linewidth]{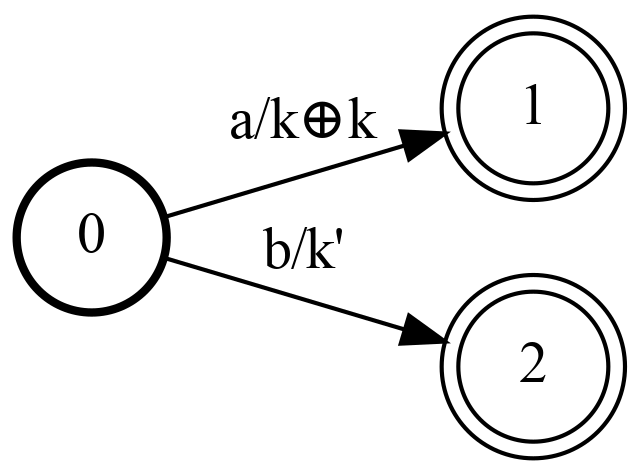}
\end{tabular}
\caption{
State diagrams showing a weighted NFA (left) and an equivalent DFA (right).
}
\label{f:motivating}
\end{figure*}

\begin{remark}
Given a NFA $A$ with backwards shortest distance $\beta$, the backwards shortest distance $\beta_d$ over the equivalent DFA $A_d$ can be computed from $\beta$:
\begin{equation*}
\beta_d(q) = \bigoplus_{(q_i, k_i) \in q} k_i \otimes \beta(q_i)
\end{equation*}
\label{r:det-sd}
for any $q \in Q_d$ \citep{MohriRiley2002}.
\end{remark}

Since $A$ is assumed to be acyclic, $\beta$ can be computed in $O(|Q|)$ time \citep[\S4.1]{Mohri2002},
and once $\beta$ has been computed, $\beta_d(q)$ can also be computed in linear time in $|q| \le |Q|$ for any $q \in Q_d$.
This computation can be performed lazily (\say{on the fly}) as soon as the existence of $q \in Q_d$ is known,
without requiring $A_d$ to be fully constructed.

\subsection{Shortest string}

\begin{definition}
Let $P_z$ be the set of paths with string $z \in \Sigma^*$,
and let the weight of $P_z$ be

\begin{equation*}
\sigma(z) = \bigoplus_{p \in P_z} \bar{k}[p] .
\end{equation*}
\end{definition}

\begin{definition}
A \emph{shortest string} $z$ is one such that $\forall z' \in \Sigma^*, \sigma(z) \preceq \sigma(z')$.
\end{definition}

\begin{lemma}
In an idempotent semiring, a shortest path's string is also a shortest string.
\end{lemma}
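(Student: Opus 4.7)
The plan is to take a shortest path $p^{\star}$, let $z^{\star} = z[p^{\star}]$ be its string, and show directly that $\sigma(z^{\star}) \preceq \sigma(z')$ for every $z' \in \Sigma^{\star}$. The key move is to pass between the per-path and per-string decompositions of $\beta(s)$ using idempotence of $\oplus$ and the induced natural order.

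First I would record the basic order fact in an idempotent semiring: for any $a, b \in \Weight$, we have $a \oplus b \preceq a$ and $a \oplus b \preceq b$. This follows from the definition of $\preceq$ by computing $(a \oplus b) \oplus a = a \oplus a \oplus b = a \oplus b$ using commutativity and idempotence, and symmetrically for $b$. By an immediate induction this extends to any (finite) $\oplus$-sum: each summand dominates the sum in the natural order.

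Next I would rewrite $\beta(s)$ in two ways. By definition $\beta(s)$ is the $\oplus$-sum of $\bar{k}[p]$ over all complete paths $p$; grouping complete paths by their string yields
\begin{equation*}
\beta(s) \;=\; \bigoplus_{z \in \Sigma^{\star}} \sigma(z).
\end{equation*}
Applying the order fact above, each summand dominates the whole, so $\beta(s) \preceq \sigma(z')$ for every $z' \in \Sigma^{\star}$. On the other hand, $\sigma(z^{\star}) = \bigoplus_{p \in P_{z^{\star}}} \bar{k}[p]$ is a sum containing $\bar{k}[p^{\star}]$ as a summand, so the same fact gives $\sigma(z^{\star}) \preceq \bar{k}[p^{\star}]$. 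Since $p^{\star}$ is a shortest path, $\bar{k}[p^{\star}] = \beta(s)$, hence $\sigma(z^{\star}) \preceq \beta(s) \preceq \sigma(z')$, and transitivity of $\preceq$ finishes the argument.

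The only subtle point I anticipate is justifying that $\preceq$ is genuinely a (partial) order — in particular transitive and antisymmetric — when the semiring is only assumed idempotent rather than having the path property; but this is a standard consequence of idempotence, commutativity, and associativity of $\oplus$ and can be dispatched in a line. Everything else is bookkeeping: re-indexing the big $\oplus$ by string rather than by path, and invoking the one-line monotonicity-free inequality $a \oplus b \preceq a$.
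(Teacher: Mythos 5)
Your proposal is correct and follows essentially the same route as the paper: both arguments rest on the fact that in an idempotent semiring an $\oplus$-sum is $\preceq$ each of its summands, applied once to bound $\sigma(z[p])$ by $\bar{k}[p]$ and once to bound $\beta(s)$ by $\sigma(z')$. Your version is somewhat more explicit than the paper's --- in particular you justify the order facts ($a \oplus b \preceq a$, transitivity of $\preceq$, and the regrouping $\beta(s) = \bigoplus_{z} \sigma(z)$) that the paper's one-line proof leaves implicit.
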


\begin{proof}
Let $p$ be a shortest path. By definition, $\bar{k}[p] \preceq \bar{k}[p']$ for all complete paths $p'$. It follows that $\forall z' \in \Sigma^*$
\begin{align*}
\sigma(z[p]) &= \bigoplus_{p \in P_z} \bar{k}[p] \preceq \sigma(z'[p']) \\
             &=  \bigoplus_{p' \in P_z} \bar{k}[p']
\end{align*}

\noindent
thus $z[p]$ is the shortest string.
\end{proof}

\begin{lemma}
\label{lem:shortest-string-det}
In a DFA over a monotonic semiring, a shortest string is the string of a shortest path in that DFA viewed as an WFSA over the corresponding companion semiring.
\end{lemma}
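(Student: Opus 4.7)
The plan is to reduce to the previous lemma applied to the companion semiring, then observe that the quantity $\sigma$ is preserved when passing between the two semirings because a DFA admits at most one complete path per string.

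First I would invoke the remark that the companion semiring has the path property, which (by an earlier remark) implies it is idempotent. The previous lemma then applies directly: in the DFA viewed as a WFSA over the companion semiring, the string $z[p^*]$ of a shortest path $p^*$ is a shortest string with respect to $\sigma$ evaluated in the companion semiring.

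Second, I would argue that $\sigma(z)$ takes the same value in the original semiring and in its companion, for every $z \in \Sigma^*$. The determinism of the DFA forces $|P_z| \le 1$ for all $z$, by a straightforward induction on $|z|$ using the uniqueness of outgoing transitions per label at each state. Thus the $\oplus$-sum defining $\sigma(z)$ has either zero or one term. On such sums $\oplus$ and $\widehat{\oplus}$ necessarily coincide: the empty sum is the shared identity $\Zero$, and a one-element sum is just its unique summand. Since $\otimes$ is shared between a semiring and its companion, $\bar{k}[p]$ for any fixed complete path $p$ is also the same in both, so $\sigma(z)$ agrees across the two semirings.

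Combining the two steps, $z[p^*]$ minimises $\sigma$ in the companion semiring, and by the agreement of $\sigma$ it also minimises $\sigma$ in the original semiring, which is the definition of a shortest string. The one mildly delicate point, and the only real obstacle, is articulating the coincidence of $\oplus$ and $\widehat{\oplus}$ on at-most-one-element sums cleanly; beyond that the result is essentially immediate from the companion construction and the preceding lemma.
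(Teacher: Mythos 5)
Your proof is correct, and it reaches the result by a somewhat different route than the paper, which also happens to reverse the direction of the implication. The paper argues directly: take a shortest string $z$, use determinism to identify $\sigma(z)$ with the weight $\bar{k}[p]$ of the unique complete path $p$ carrying $z$, conclude $\bar{k}[p] = \sigma(z) \preceq \sigma(z[p']) = \bar{k}[p']$ for every complete path $p'$, and hence that $p$ is a shortest path over the companion semiring. You instead start from a shortest path $p^*$ over the companion semiring, apply the first lemma there (legitimately, since the companion semiring has the path property and is therefore idempotent) to get that $z[p^*]$ minimises the companion version of $\sigma$, and transfer back to the original semiring by noting that every sum defining $\sigma(z)$ in a DFA has at most one term, on which $\oplus$ and $\widehat{\oplus}$ coincide. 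That transfer is exactly the paper's key observation (\say{determinism implies $\bar{k}[p'] = \sigma(z[p'])$}), so the mathematical substance is shared; your detour through the first lemma buys little, since in a DFA each $P_z$ is a singleton and that lemma's content degenerates. Two remarks. First, what you prove verbatim is \emph{the string of a companion-shortest path is a shortest string}, the converse of the lemma as literally worded; in a DFA the two are equivalent because the string-to-path correspondence on complete paths is a bijection preserving the minimised quantity, and your direction is in fact the one the algorithm's correctness rests on, but to get the stated direction you should append the one-line converse argument (which is precisely the paper's proof). Second, your explicit handling of the empty sum ($P_z = \emptyset$ giving $\sigma(z) = \Zero$) is a point the paper glosses over; it is only harmless because the semiring is negative, so $\Zero$ is $\preceq$-maximal and strings with no accepting path cannot beat strings with one.
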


\begin{proof}
Determinism implies that for all complete path $p'$, $\bar{k}[p'] = \sigma(z[p'])$. 
Let $z$ be the shortest string in the DFA and $p$ the unique path admitting the string $z$.
Then

\begin{equation*}
\bar{k}[p] = \sigma(z) \preceq \sigma(z[p']) = \bar{k}[p']
\end{equation*}

\noindent
for any complete path $p'$.
Hence

\begin{equation*}
\bar{k}[p] = \widehat{\bigoplus_{p' \in P_{s \rightarrow F}}} \bar{k}[p'] . 
\end{equation*}

\noindent
Thus $p$ is a shortest path in the DFA viewed over the companion semiring.
\end{proof}

\subsection{\astar~search}

\astar~search \citep{HartEtAl1968} is a common \emph{shortest-first} search strategy for computing the shortest path in a WFSA over an idempotent semiring.
It can be thought of as a variant of \citeauthor{Dijkstra1959}'s (\citeyear{Dijkstra1959}) algorithm, in which exploration is guided by a shortest-first priority queue discipline.

In Dijkstra's algorithm, at every iteration the algorithm explores the state $q$ which minimizes $\alpha(q)$, the shortest distance from the initial state $s$ to $q$, until all states have been visited.
In \astar~search, search priority is determined by a some function of $\digamma \subseteq \State \times \Weight$, known as the \emph{heuristic}, which gives an estimate of the weight of paths from some state to a final state.
At every iteration, \astar~instead explores the state $q$ which minimizes $\alpha(q) \otimes \digamma(q)$.%
\footnote{
    One can thus view \citeauthor{Dijkstra1959}'s algorithm as a special case of \astar~search with the uninformative heuristic $\digamma = \One$.
}

\begin{definition}
An \astar~heuristic is \emph{admissible} if it never overestimates the shortest distance to a state \citep[103]{HartEtAl1968}.
That is, it is admissible if $\forall q \in Q : \digamma(q) \preceq \beta(q)$.
\end{definition}

\begin{definition}
An \astar~heuristic is \emph{consistent} 
if it never overestimates the cost of reaching a successor state.
That is, it is consistent if $\forall q, r \in Q$ such that $\digamma(q) \preceq k \otimes \digamma(r)$ if $(q, z, k, r) \in \delta$,
i.e., if there is a transition from $q$ to $r$ with some label $z$ and weight $k$.
\end{definition}


\begin{remark}
\label{rem:astar}
If $\digamma$ is \emph{admissible} and \emph{consistent},
\astar~search is guaranteed to find a shortest path (if one exists) after visiting all states such that $\digamma[q] \preceq \beta[s]$ \citep[104f.]{HartEtAl1968}.
\end{remark}

\section{The algorithm}
\label{s:algorithm}

Consider an acyclic, $\epsilon$-free WFSA over a monotonic negative semiring $(\Weight, \oplus, \otimes, \Zero, \One)$ with total order $\preceq$ for which we wish to find the shortest string.
The same WFSA can also be viewed as a WFSA over the corresponding companion semiring $(\Weight, \widehat{\oplus}, \otimes, \Zero, \One)$,
and we denote by $\widehat{\beta}$ the backward shortest-distance over this companion semiring.
We prove two theorems, and then introduce an algorithm for search.

\begin{theorem}
The backwards shortest distance of an WFSA over a monotonic negative semiring is an admissible heuristic for the \astar~search over its companion semiring.
\end{theorem}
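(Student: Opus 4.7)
The plan is to show that for every state $q$, $\beta(q) \preceq \widehat{\beta}(q)$, where $\widehat{\beta}$ denotes the backward shortest distance computed with $\widehat{\oplus}$ in the companion semiring. Since $\widehat{\oplus}$ is exactly the $\preceq$-minimum, $\widehat{\beta}$ is the quantity against which admissibility must be checked when \astar~is run over the companion semiring; hence the pointwise inequality above is precisely the admissibility of the heuristic $\digamma := \beta$.

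To prove the inequality, I would unfold both quantities as aggregates over the same finite index set: pairs $(f, p)$ with $f \in F$ and $p \in P_{q \rightarrow f}$, finite because the paper restricts attention to acyclic, $\epsilon$-free WFSAs. Each such pair carries a weight $w(f, p) := \bigl(\bigotimes_{k_i \in k[p]} k_i\bigr) \otimes \omega(f)$, so that $\beta(q) = \bigoplus_{(f, p)} w(f, p)$ while $\widehat{\beta}(q)$ is the $\preceq$-minimum of those same terms.

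The core ingredient is the remark already established in the excerpt: in a monotonic negative semiring, $a \oplus b \preceq b$. Iterating this (with commutativity of $\oplus$) over the finitely many summands yields $\beta(q) \preceq w(f, p)$ for every pair $(f, p)$, and specializing to the pair realizing the $\preceq$-minimum on the right-hand side gives $\beta(q) \preceq \widehat{\beta}(q)$. The only care required is to present the nested $\oplus$-sum in the definition of $\beta$ as a single aggregate so that the ``$\oplus$-sum is $\preceq$ each summand'' observation applies uniformly rather than through a separate induction on $F$ and on $P_{q \rightarrow f}$; beyond that, there is no deeper obstacle.
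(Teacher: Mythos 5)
Your proposal is correct and follows essentially the same route as the paper: both arguments rest on the fact that in a monotonic negative semiring an $\oplus$-sum is $\preceq$ each of its summands, hence $\preceq$ their $\widehat{\oplus}$-minimum, giving $\beta(q) \preceq \widehat{\beta}(q)$ pointwise. Your extra care in flattening the nested sum over $(f,p)$ pairs and invoking acyclicity for finiteness is a minor tightening of the same argument, not a different approach.
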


\begin{proof}
In a monotonic negative semiring, the $\oplus$-sum of any $n$ terms is upper-bounded by each of the $n$ terms and hence by the $\widehat{\oplus}$-sum of these $n$ terms. It follows that
\begin{align*}
    \digamma(q) &= \beta(q)  \\
                &=\bigoplus_{p \in P_{q \rightarrow F}} \bar{k}[p]
                \preceq \widehat{\bigoplus_{p \in P_{q \rightarrow F}}} \bar{k}[p]  \\
                &= \widehat{\beta}(q),
\end{align*}

\noindent
and this shows that $\digamma = \beta$ is an admissible heuristic for $\widehat{\beta}$.
\end{proof}

\begin{theorem}
The backwards shortest distance of an WFSA over a monotonic negative semiring is a consistent heuristic for the \astar~search over its companion semiring.
\end{theorem}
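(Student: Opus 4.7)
The goal is to show that whenever $(q,z,k,r) \in \delta$, the backwards shortest distance satisfies $\beta(q) \preceq k \otimes \beta(r)$, where $\preceq$ is the natural order of the companion semiring (equivalently, of the original monotonic negative semiring, since both share $\preceq$).

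The plan is to exploit the decomposition of paths from $q$ to $F$ according to their first transition. Let $P^{(q,z,k,r)}_{q \rightarrow F} \subseteq P_{q \rightarrow F}$ denote the set of paths out of $q$ whose first arc is precisely the given transition $(q,z,k,r)$. Removing this leading arc yields a bijection between $P^{(q,z,k,r)}_{q \rightarrow F}$ and $P_{r \rightarrow F}$ that shifts each path weight by a leading factor of $k$. So by repeated application of left-distributivity of $\otimes$ over $\oplus$ (item 3 of Definition~\ref{d:semiring}), one obtains
\begin{equation*}
\bigoplus_{p \in P^{(q,z,k,r)}_{q \rightarrow F}} \bar{k}[p] \;=\; k \otimes \bigoplus_{p' \in P_{r \rightarrow F}} \bar{k}[p'] \;=\; k \otimes \beta(r).
\end{equation*}

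Next, I would invoke the remark that in a monotonic negative semiring $a \oplus b \preceq b$: iterating this shows that adding more non-$\Zero$ terms to an $\oplus$-sum can only move the result down in $\preceq$, so the $\oplus$-sum over any superset of paths is $\preceq$ the $\oplus$-sum over any subset. Since $P^{(q,z,k,r)}_{q \rightarrow F} \subseteq P_{q \rightarrow F}$, this yields
\begin{equation*}
\beta(q) \;=\; \bigoplus_{p \in P_{q \rightarrow F}} \bar{k}[p] \;\preceq\; \bigoplus_{p \in P^{(q,z,k,r)}_{q \rightarrow F}} \bar{k}[p] \;=\; k \otimes \beta(r),
\end{equation*}
which is exactly the consistency condition with $\digamma = \beta$.

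The steps themselves are short, so there is no real obstacle beyond bookkeeping: the only subtlety is justifying that the sub-sum restricted to paths beginning with the fixed arc factors cleanly as $k \otimes \beta(r)$ even when there are parallel arcs from $q$ to $r$ (the restriction to one specific transition avoids double-counting), and that the superset-to-subset monotonicity of $\oplus$-sums follows from $a \oplus b \preceq b$ rather than from idempotence, which is the only place where the \emph{negative} hypothesis (as opposed to merely monotonic) is essential.
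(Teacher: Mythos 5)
Your proof is correct and follows essentially the same route as the paper: decompose $\beta(q)$ by the first transition out of $q$, use distributivity to factor the relevant sub-sum as $k \otimes \beta(r)$, and invoke the monotonic-negative property that an $\oplus$-sum lies below ($\preceq$) any partial sum. Your version is in fact slightly more careful than the paper's one-line recursion, since restricting to paths beginning with the specific arc $(q,z,k,r)$ cleanly handles parallel arcs and the case where $q$ is itself final.
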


\begin{proof}
Let $(q, z, k, r)$ be a transition in $\delta$. Leveraging again the property that an $\oplus$-sum of any $n$ terms is upper-bounded by any of these terms, we show that
\begin{align*}
 \digamma(q) &= \beta(q) \\
 &= \bigoplus_{p \in P_{q \rightarrow F}} \bar{k}[p] \\
 &= \bigoplus_{(q, z', k', r') \in \delta} k' \otimes \beta(r') \preceq k \otimes \beta(r) \\
 &= k \otimes \digamma(r)
\end{align*}

\noindent
showing $\digamma = \beta$ is a consistent heuristic.
\end{proof}

Having established that this is an admissible and consistent heuristic for \astar~search over the companion semiring,
a naïve algorithm then suggests itself,
following Lemma~\ref{lem:shortest-string-det} and Remark~\ref{rem:astar}.
Given a non-deterministic WFSA over the monotonic negative semiring $(\Weight, \oplus, \otimes, \Zero, \One)$, apply determinization to obtain an equivalent DFA,
compute $\beta_d$, the backwards shortest distance over the resulting DFA over $(\Weight, \oplus, \otimes, \Zero, \One)$ and then perform \astar~search over the companion semiring using $\beta_d$ as the heuristic.
However, as mentioned in Remark \ref{r:nfa},
determinization has an exponential worse-case complexity in time and space and is often prohibitive in practice.
Yet determinization%
---and the computation of elements of $\beta_d$---%
only need to be performed for states actually visited by \astar~search.
Let $\beta$ denote the backwards shortest distance over a non-deterministic WFSA over the monotonic negative semiring $(\Weight, \oplus, \otimes, \Zero, \One)$.
Then, the algorithm is as follows:

\begin{enumerate}
\item Compute $\beta$ over $(\Weight, \oplus, \otimes, \Zero, \One)$.
\item Lazily determinize the WFSA, lazily computing $\beta_d$ from $\beta$ over $(\Weight, \oplus, \otimes, \Zero, \One)$.
\item Perform \astar~search for the shortest string over $(\Weight, \widehat{\oplus}, \otimes, \Zero, \One)$ with $\beta_d$ as the heuristic.
\end{enumerate}

\section{Evaluation}
\label{s:evaluation}

We evaluate the proposed algorithm using non-idempotent speech recognition lattices.

\subsection{Data}

We search for the shortest string in a sample of 700 word lattices derived from Google Voice Search traffic.
This data set was previously used by \citet{MohriRiley2015} and \citet[ch.~4]{FSTP} for evaluating related WFSA algorithms.
Each path in these lattices is a single hypothesis transcription produced by a production-grade automatic speech recognizer, here treated as a black box.
The exact size of each input lattice size is determined by a probability threshold,
so paths with probabilities below a certain threshold have been pruned.
These lattices are acyclic, $\epsilon$-free, non-deterministic WFSAs over the log semiring, a monotonic non-idempotent semiring.

\subsection{Implementation}

The above algorithm is implemented as part of 
OpenGrm-BaumWelch,
an open-source C++17 library released under the Apache-2.0 license.%
\footnote{
   \url{https://baumwelch.opengrm.org}
}
This toolkit includes \texttt{baumwelchdecode},
a command-line tool which implements the above algorithm over the log semiring,
using the tropical semiring as a companion semiring.
This implementation depends in turn on implementations of determinization, shortest distance, and shortest path algorithms provided by OpenFst \citep{AllauzenEtAl2007}.
This command-line tool, along with various OpenFst command-line utilities, were used to conduct the following experiment.

\subsection{Methods}

We compare the proposed algorithm to the naïve algorithm mentioned in (\S\ref{s:algorithm}).
The naïve algorithm first exhaustively constructs the equivalent DFA by applying weighted determinization%
---as implemented by OpenFst's \texttt{fstdeterminize} command-line tool---%
then performs \astar~search on the DFA over the companion semiring.
Its complexity is bounded by the number of states in the full DFA.
In contrast, the complexity of the proposed algorithm is bounded by the number of DFA states dynamically constructed%
---i.e., when they are visited---%
during search.
As an additional measure, we also compare the number of states visited by the proposed algorithm to the number of states in the original NFA lattice.

\subsection{Results}

\autoref{f:dfa} compares the proposed algorithm to the naïve algorithm.
One can see that the naïve algorithm may in some cases have to construct upwards of 100,000 states for word lattices where the proposed algorithm need only construct hundreds of states.
This demonstrates that the proposed algorithm is substantially more efficient than the naïve algorithm.
\autoref{f:nfa} visualizes the number of states visited by the proposed algorithm as a function of the size of the input NFA.

\begin{figure}[ht]
\centering
\includegraphics{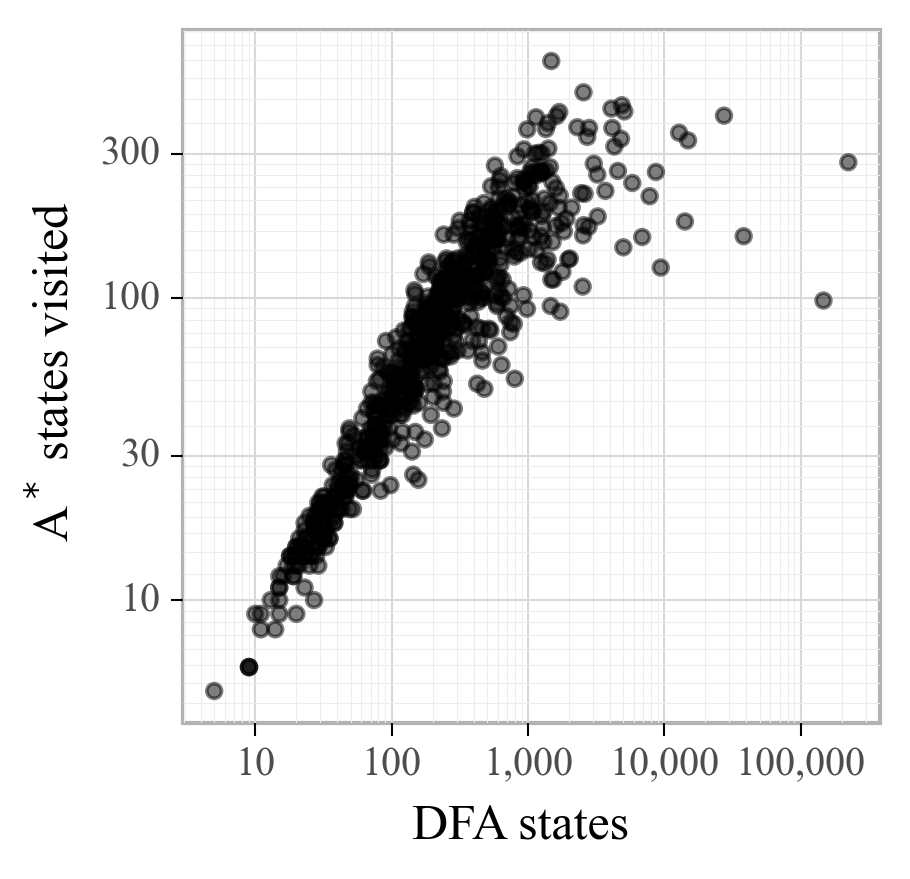}
\caption{
Comparison of word lattice decoding 
with the proposed algorithm vs.~the naïve algorithm.
The $x$-axis shows the number of states in the full DFA; the $y$-axis shows the number of DFA states visited by the proposed algorithm.
Both axes are in logarithmic scale.}
\label{f:dfa}
\end{figure}

\begin{figure}[ht]
\centering
\includegraphics{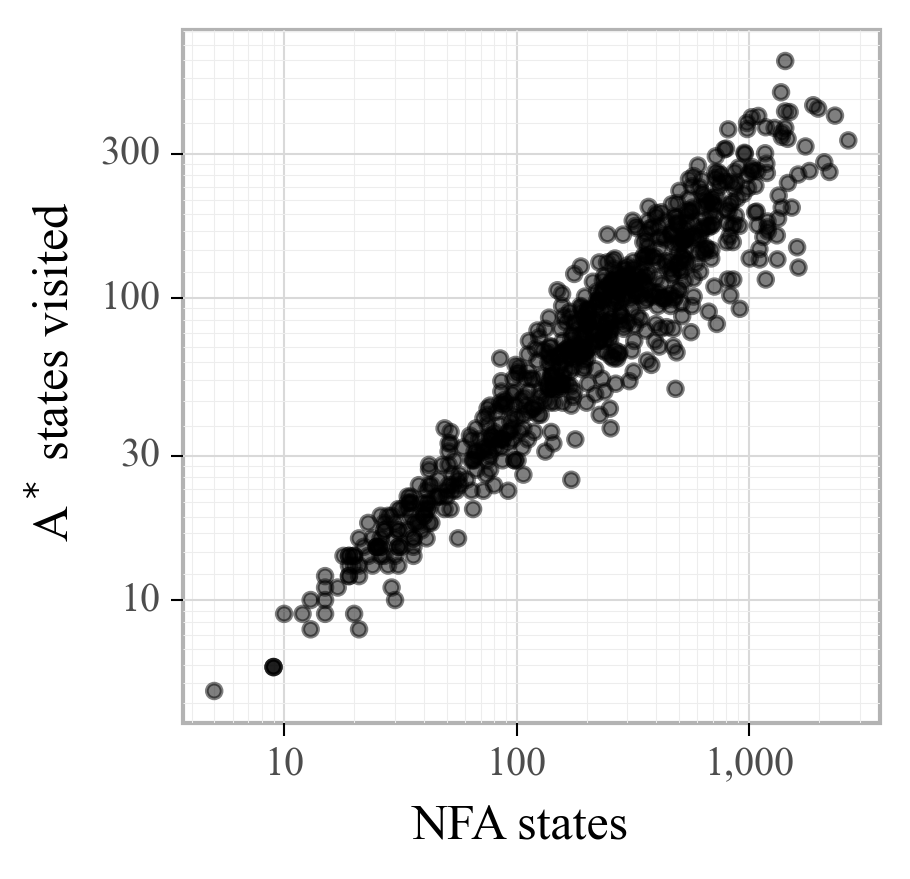}
\caption{
Comparison of word lattice decoding with the proposed algorithm
to the size of the input NFA.
The $x$-axis shows the number of states in the input NFA; the $y$-axis shows the number of states visited by the proposed algorithm.
Both axes are in logarithmic scale.}
\label{f:nfa}
\end{figure}

\section{Related work}

Several prior studies use \astar~search for decoding speech lattices over idempotent semirings.
For example, \citet{MohriRiley2002} describe a related algorithm for computing $n$-best lists over an idempotent WFSA.
Like the algorithm proposed here, they use \astar~search and on-the-fly determinization;
however, they do not consider decoding over non-idempotent semirings.
We note that the algorithm proposed here could be generalized to compute the $n$ shortest strings over a non-idempotent WFSA.
Specifically, 
one would perform \astar~search 
over the companion semiring using $\beta_d$ as the heuristic as described in \S\ref{s:algorithm},
but would then solve for the $n$ shortest strings \citep[\S6]{Mohri2002}.%
\footnote{
    We thank an anonymous reviewer for this observation.
}

\section{Conclusions}

We propose an algorithm which allows for efficient shortest string decoding of weighted automata over non-idempotent semirings
using \astar~search and on-the-fly determinization.
We find that \astar~search results in a substantial reduction in the number of states visited during decoding,
which in turn minimizes the amount of determinization required to find the shortest string.

We envision several possible applications for the proposed algorithm.
It could be used to exactly decode noisy channel
\say{decipherment} models \citep[e.g.,][]{KnightEtAl2006}
of the form
\begin{equation*}
\hat{P}(p \mid c) \propto P(p) P(c \mid p)
\end{equation*}

\noindent
estimated with ordinary EM,
as well as training scenarios which mix rounds of ordinary and Viterbi EM \citep[e.g.,][]{SpitkovskyEtAl2011}.
The decoding algorithm could also be used for exact decoding of lattices scored with interpolated language models \citep[e.g.,][]{JelinekMercer1980} of the form
\begin{equation*}
\hat{P}(w \mid h) = \lambda_h \tilde{P}(w \mid h) + (1 - \lambda_h) \hat{P}(w \mid h')
\end{equation*}
where $\lambda_h$ is estimated using ordinary EM.

\section{Limitations}

While the evaluation (\S\ref{s:evaluation}) finds the proposed algorithm to be substantially more efficient than the naïve algorithm on real-world data,
it has the same exponential worst-case complexity as exhaustive determinization of acyclic WFSAs.
This worst case dominates the linear-time operations used to compute $\beta_n$ and $\beta_d$, and to solve for the single shortest path.
However, we conjecture that the worst case is unlikely to arise for topologies encountered in actual speech and language processing applications.

\section{Broader impacts}

We are aware of no ethical issues raised by the proposed algorithm beyond issues of dual use, bias, etc.,
which are inherent to all known speech and language technologies.

\section*{Acknowledgments}

Thanks to Michael Riley, Brian Roark, and Jeffrey Sorensen for discussion and code review, and Avital Hirsch and Wen Zhang for proofreading.

\bibliography{astar}

\end{document}